\documentclass[12pt]{article}
\usepackage[a4paper]{geometry}
\geometry{textwidth=17cm,inner=1.5cm,top=4cm,textheight=20cm}
\usepackage[italian,english]{babel}

\usepackage{amsmath}
\usepackage{amsfonts}
\usepackage{amstext}
\usepackage{amssymb}
\usepackage{amsthm}
\usepackage{amscd}

\usepackage[pagebackref,draft=false]{hyperref}
\hypersetup{colorlinks,
linkcolor=myrefcolor,
citecolor=mycitecolor,
urlcolor=myurlcolor}

\usepackage[capitalize]{cleveref}
\usepackage{caption}

\usepackage{xcolor}
\definecolor{myurlcolor}{rgb}{0,0,0.4}
\definecolor{mycitecolor}{rgb}{0,0.5,0}
\definecolor{myrefcolor}{rgb}{0.5,0,0}
\usepackage{graphicx}
\usepackage{tikz}
\usepackage{tikz-cd}
\usepackage{mathrsfs}

\usepackage{etoolbox}
\usepackage{makeidx}
\usepackage{sectsty}
\usepackage{dsfont}
\usepackage{enumitem} 
\usepackage[]{latexsym}
\usepackage{braket}
\usepackage{caption}
\usepackage[utf8]{inputenx}
\usepackage[T1]{fontenc}
\usepackage{lmodern}
\usepackage{textcomp}
\usepackage{microtype}
\usepackage{totcount}
\usepackage{blindtext}

\newtheorem{remark}{Remark}

\newtheorem{proposition}{Proposition}

\newtheorem{example}{Example}

\newtheorem*{proof*}{Proof}
\newtheorem*{note*}{Note}

\newcommand{\be}{\begin{equation}}
\newcommand{\ee}{\end{equation}}
\newcommand{\bea}{\begin{eqnarray}}
\newcommand{\eea}{\end{eqnarray}}
\newcommand{\vsp}{\vspace{0.4cm}}




\newcommand{\stsp}{\mathscr{S}}


\newcommand{\car}{\mathcal{P}}

\title{Differential Geometry of Quantum States, Observables and Evolution}

\author{F. M. Ciaglia$^{1,2,3,6}$\href{https://orcid.org/0000-0002-8987-1181}{\includegraphics[scale=0.7]{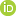}}, A. Ibort$^{4,5,7}$\href{https://orcid.org/0000-0002-0580-5858}{\includegraphics[scale=0.7]{ORCID.png}},  G. Marmo$^{1,2,8}$\href{https://orcid.org/0000-0003-2662-2193}{\includegraphics[scale=0.7]{ORCID.png}} \\
\footnotesize{$^{1}$\textit{ Dipartimento di Fisica ``E. Pancini'', Universit\`a di Napoli Federico II, Napoli, Italy}} \\
\footnotesize{$^{2}$\textit{ INFN-Sezione di Napoli, Napoli, Italy.}} \\
\footnotesize{$^{3}$\textit{ Max Planck Institute for Mathematics in the Sciences, Leipzig, Germany}} \\
\footnotesize{$^{4}$\textit{ ICMAT, Instituto de Ciencias Matem\'{a}ticas (CSIC-UAM-UC3M-UCM)}}  \\
\footnotesize{$^{5}$\textit{ Depto. de Matem\'aticas, Univ. Carlos III de Madrid, Legan\'es, Madrid, Spain}}  \\
\footnotesize{$^{6}$\textit{ e-mail: \texttt{florio.m.ciaglia[at]gmail.com}}, $^{7}$\textit{ e-mail: \texttt{albertoi[at]math.uc3m.es}}} \\
\footnotesize{$^{8}$\textit{ e-mail: \texttt{marmo[at]na.infn.it}}}
}

\date{}

\begin{document}

\maketitle

\abstract{The geometrical description of Quantum Mechanics is reviewed and proposed as an alternative picture to the standard ones.   The basic notions of observables, states, evolution and composition of systems are analised from this perspective, the relevant geometrical structures and their associated algebraic properties are highlighted, and the Qubit example is thoroughly discussed.}

\vsp

\begin{note*}
At the time of the creation and submission of this work to the Lecture Notes of the Unione Matematica Italiana 25, F. M. Ciaglia was not yet affiliated with the Max-Planck-Institut f\"{u}r Mathematik in den Naturwissenschaften in Leipzig with which he is affiliated at the time of the submission of this work to arXiv.
\end{note*}

\section{Introduction}
\label{sec:1}

Finding a unified formalism for both Quantum Mechanics and General Relativity is an outstanding problem facing theoretical physicists.   
From the mathematical point of view, the structural aspects of the two theories could not be more different.  

Quantum Mechanics is prevalently an algebraic theory; the transformation group, in the sense of Klein's programme, is a group of linear transformations (the group of unitary transformations on a Hilbert space for instance). 
General Relativity, on the other hand, sees the triumph of Differential Geometry. 
The covariance group of the theory is the full diffeomorphisms group of space-time.

The usual approach of non-commutative geometry consists on the algebraization of the geometrical background \cite{Co94}; here, we will discuss an opposite attempt:  to geometrise the algebraic description of Quantum Mechanics.
In different terms, we attempt at a description of Quantum Mechanics where non-linear transformations are possible and the full diffeomorphisms group of the carrier space becomes the covariance group of the theory.

Thus we are going to introduce a ``quantum differential manifold'' as a carrier space for our description of Quantum Mechanics, that is, a standard smooth manifold (possibly infinite-dimensional) that will play the role of the carrier space of the quantum systems being studied.  
We shall use a simplifying assumption to avoid introducing infinite dimensional geometry which would go beyond the purposes of this presentation, which is conceptual rather than technical.

Of course, this idea is not new and it has been already explored earlier.   Just to mention a few, we may quote the early attempts by Kibble \cite{Ki79}, the essay by Asthekar and Schilling \cite{Asht99}, the mathematical foundations laid by Cirelli et al \cite{Ci90} and the sistematic search for a geometric picture led by Marmo (see for instance early ideas in the subject in the book \cite{3gbook} and some preliminar results in \cite{Gr05} or the review \cite{Erc10}).  This work is a continuation of this line of thought and contains a more comprehensive description of such attempt.

Let us briefly recall first the various pictures of Quantum Mechanics, emphasising the algebraic structures present in their description.


\subsection{On the many pictures of Quantum Mechanics}

As it is well known,  modern Quantum Mechanics was first formulated by Heisenberg as matrix mechanics,  immediately after Schr\"odinger formulated his wave mechanics.  
These pictures got a better mathematical interpretation by Dirac \cite{Di81} and Jordan \cite{Bo25}, \cite{Jo34} with  the introduction of Hilbert spaces and Transformation Theory.
Further, a sound mathematical formulation was provided by von Neumann \cite{Ne32}.  

In all of these pictures and descriptions,  the principle of analogy with classical mechanics,  as devised by Dirac,  played a fundamental role.
The canonical commutation relations (CCR) were thought to correspond or to be analogous to the Poisson Brackets on phase space.  Within the rigorous formulation of von Neumann,  domain problems were identified showing that at least one among position and momentum operators should be an unbounded operator \cite{Wi47}.    To tackle these problems,  Weyl introduced an ``exponentiated form'' of the commutation relations in terms of unitary operators \cite{We27}, i.e., a projective unitary representation of the symplectic Abelian vector group, interpreted also as a phase-space with a Poisson Bracket.
The $C^*$-algebra of observables,  a generalization of the algebraic structure emerging from Heisenberg picture,  would be obtained as the group-algebra of the Weyl operators.


\subsection{Dirac-Schr\"odinger vs. Heisenberg-Weyl picture}\label{subsec: Dir-Schr vs Hei-Wey}

Even if commonly used, there is not an universal interpretation of the term ``picture'' used above as applied to a particular mathematical emboidement of the axioms used in describing quantum mechanical systems.
The description of any physical system requires the identification of:

\begin{itemize}
\item[i)] \, States.
\item[ii)] \, Observables.
\item[iii) ] \, A probability interpretation.
\item[iv)] \, Evolution.
\item[v)] \, Composition of systems.
\end{itemize}

Thus, in this work, a ``picture'' for a quantum mechanical system will consist of a mathematical description of: i) a collection of states $\mathcal{S}$; ii) a collection of measurable physical quantites or observables $\mathcal{A}$; iii) a statistical interpretation of the theory, that is, a pairing:
\begin{equation}\label{measure}
\mu \colon \mathcal{S} \times \mathcal{A} \to \mathbf{Bo}(\mathbb{R})
\end{equation}  
where $\mathbf{Bo}(\mathbb{R})$ is the set of Borel probability measures on the real line and if $\rho \in \mathcal{S}$ denotes an state of the system and $a$ an observable, then, the pairing $\mu (\rho, a)(\Delta)$ is interpreted as the probability $P(\Delta | a, \rho)$ that the outcome of measuring the observable $a$ lies in the Borelian set $\Delta \subset \mathbb{R}$ if the system is in the state $\rho$.  In addition to these ``kinematical'' framework a ``picture'' of a quantum system should provide iv) a mathematical description of its dynamical behaviour and v) prescription for the composition of two different systems.

\subsubsection{Dirac-Schr\"odinger picture}
Thus, for instance, in the Dirac-Schr\"odinger picture with any physical system we associate a complex separable Hilbert space $\mathcal{H}$.  The (pure) states of the theory are given by rays in the Hilbert space, or equivalently by rank-one orthogonal projectors $\rho =  | \psi \rangle  \langle \psi | /\langle \psi | \psi \rangle$ with $| \psi \rangle \in \mathcal{H}$.     Observables are Hermitean or self-adjoint operators $a$ (bounded or not) on the Hilbert space and the statistical interpretation of the theory is provided by the resolution of the identity $E$ (or spectral measure $E(d\lambda)$) associated to the observable by means of the spetral theorem, $a = \int \lambda E(d\lambda) $.  Thus the probabilty $P(\Delta | a, \rho)$ that the outcome of measuring the observable $A$ when the system is in the state $\rho$ would lie in the Borel set $\Delta \subset \mathbb{R}$, is given by:
\begin{equation}\label{Pdelta}
P(\Delta | a, \rho) = \int_\Delta \mathrm{Tr}( \rho E(d\lambda) ) \, .
\end{equation}   
Moreover the evolution of the system is dictated by a Hamiltonian operator $H$ by means of Schr\"odinger's equation:
$$
i\hbar \frac{d}{dt} | \psi \rangle  = H  | \psi \rangle \, .
$$
Finally, if $\mathcal{H}_A$ and $\mathcal{H}_B$ denote the Hilbert spaces corresponding to two different systems, the composition of them has associated the Hilbert space $\mathcal{H}_A \otimes \mathcal{H}_B$. 

\subsubsection{Heisenberg-Born-Jordan}
In contrast, in the Heisenberg-Born-Jordan picture a unital $C^*$-algebra $\mathcal{A}$ is associated to any physical system.   Observables are real elements $a = a^*$ in $\mathcal{A}$ and states are normalised positive linear functionals $\rho$ on $\mathcal{A}$:
$$
\rho (a^*a) \geq 0 \, , \quad \rho (1_\mathcal{A}) = 1 \, ,
$$
where $1_\mathcal{A}$ denotes the unit of the algebra $\mathcal{A}$.
The GNS construction of a Hilbert space $\mathcal{H}_\rho$, once a state $\rho$ is chosen, reproduces the Dirac-Schr\"odinger picture. Similar statements can be made with respect to the statistical interpretation of the theory.  Given a state $\rho$ and an observable $a \in \mathcal{A}$, the pairing $\mu$ between states and observables, Eq. (\ref{measure}), required to provide a statistical interpretation of the theory is provided by the spectral measure associated to the Hermitean operator $\pi_\rho (a)$ determined by the canonical representation of the $C^*$-algebra $\mathcal{A}$ in the Hilbert space $\mathcal{H}_\rho$ obtained by the GNS construction with the state $\rho$.     Alternatively, given a resolution of the identity, i.e., in the discrete setting,  $E_j \in \mathcal{A}$ such that $E_i\cdot E_j = \delta_{ij} E_j$, and $\sum_j E_j = 1_\mathcal{A}$, we define $p_j(\rho) = \rho (E_j) \geq 0$, $\sum_j p_j(\rho) = 1$.
This provides the probability function of the theory.

The evolution of the theory is defined by means of a Hamiltonian $H \in \mathcal{A}$, $H = H^*$, by means of Heisenberg equation:
$$
i\hbar \frac{da}{dt} = [H, a] \, .
$$
Finally, composition of two systems with $C^*$-algebras $\mathcal{A}_A$ and $\mathcal{A}_B$ would be provided by the tensor product $C^*$-algebra $\mathcal{A}_{AB} = \mathcal{A}_A \otimes \mathcal{A}_B$ (even though there is not a unique completion of the algebraic tensor product of $C^*$-algebras in infinite dimensions, a problem that will not concern us here as the subsequent developments are restricted to the finite-dimensional situation in order to properly use the formalism of differential geometry).

\subsubsection{Other pictures}

The Dirac-Schr\"odinger and Heisenberg-Born-Jordan are far from being the only two pictures of Quantum Mechanics.   Other pictures include the Weyl-Wigner picture, where the an Abelian vector group $V$ with an invariant  symplectic structure $\omega$ is required to possess a projective  unitary representation:
$$
W \colon V \to \mathcal{U}(\mathcal{H}) \, , \quad W(v_1) W(v_2) W(v_1)^\dagger W(v_2)^\dagger = e^{i\omega(v_1,v_2)} \mathbf{1}_{\mathcal{H}} \, .
$$
The tomographic picture \cite{Ib09} has been developed in the past few years and uses a tomographic map $U$ and includes the so called Wigner picture based on the use of pseudo probability distributions on phase space; a picture based on the choice of a family of coherent states has also been partially developed recently (see for instance \cite{Ci17}).  A deep and careful reflexion would be required to analyse the `Lagrangian picture' proposed by Dirac and Schwinger, that would be treated elsewhere. 


\section{A geometric picture of Quantum Mechanics}

As it was discussed in the introduction, the proposal discussed in this work departs from the other ones in setting a geometrical background for the theory, so that the group of natural transformations of the theory becomes the group of diffeomorphisms of a certain carrier space.
In this picture, the carrier space $\car$ we associate with every quantum system is the  Hilbert manifold provided by the complex projective space.
By taking this point of view, states and observables should be defined by means of functions on $\car$.
This carrier space comes equipped with a K\"{a}hlerian structure, i.e., a symplectic structure, a Riemannian structure and a complex structure.
All three tensors, pairwise, satisfy a compatibility condition, two of them will determine the third one.
We will show how to implement on this carrier space the minimalist requirements stated at the beginning of subsection \ref{subsec: Dir-Schr vs Hei-Wey}.

As it was commented before, to properly use the formalism of differential geometry, we shall restrict our considerations to finite dimensional complex projective spaces.
We believe that, at this stage, considering infinite-dimensional systems would introduce a significative amount of technical difficulties without adding any  relevant improving in the exposition of the structural aspects of the ideas we want to convey.
A more thorough analysis of the infinite-dimensional case will be pursued elsewhere.

It is our hope that the ``geometrization'' of Quantum Mechanics can be useful to understand under which conditions any ``generalized'' geometrical quantum theory reduces  to the conventional Dirac-Schr\"odinger picture.

\vsp

{\bfseries The carrier space } $\car$ is taken to be the complex projective space $\mathbb{CP}(\mathcal{H})$ associated with the $n$-dimensional complex Hilbert space $\mathcal{H}$.
This a Hilbert manifold with a K\"ahler structure even in the infinite-dimensional case \cite{Ci90}.
The K\"ahler structure of $\car$ consits of a complex structure $J$, a metric tensor $g$ called the Fubini-Study metric, and a symplectic form $\omega$.
These tensor fields are mutually related according to the following compatibility condition

\begin{equation}
g\left(X\,,J(Y)\right)=\omega\left(X\,,Y\right)\,,
\end{equation}
where $X$ and $Y$ are arbitrary vector fields on $\car$.
The complex sum $h=g + \imath\omega$ is a Hermitian tensor on $\car$.
Following \cite{Asht99, Erc10}, we consider the canonical projection  $\pi\colon \mathcal{H}_{0}\rightarrow\mathbb{CP}(\mathcal{H})\equiv\car$ associating to each non-zero vector $\psi\in\mathcal{H}_{0}$\footnote{$\mathcal{H}_{0}$ denotes the Hilbert space $\mathcal{H}$ with the zero vector removed.} its ray $[\psi]\in\car$ and the Hermitian tensor: 
\begin{equation}
\widetilde{h}=\pi^{*}h=\frac{\langle\mathrm{d}\psi|\mathrm{d}\psi\rangle}{\langle\psi|\psi\rangle} -\frac{\langle\mathrm{d}\psi|\psi\rangle\langle\psi|\mathrm{d}\psi\rangle}{\langle\psi|\psi\rangle^{2}}\,.
\end{equation}
The real part of this tensor is symmetric, and defines the pullback to $\mathcal{H}$ of the Fubini-Study metric $g$, while the imaginary part is antisymmetric and defines the pullback to $\mathcal{H}$ of the symplectic form $\omega$.

We stress that, because our description is tensorial, we may perform any nonlinear transformation without affecting the description of the theory.   
For instance, introducing an orthonormal basis $\{|e_{j}\rangle\}_{j=1,\ldots,n}$ in $\mathcal{H}$, we can write every normalized vector $|\psi\rangle$ in $\mathcal{H}$ as a probability amplitude $|\psi\rangle=\sqrt{\mathrm{p}_{j}}\,\mathrm{e}^{\mathrm{i}\varphi_{j}}\,|e_{j}\rangle$, $\mathrm{p}_{j} \geq 0$ for all $j$.
Clearly, $(\mathrm{p}_{1},...,\mathrm{p}_{n})$ is a probability vector, that is, $\sum\,\mathrm{p}_{j}=1$, while $\mathrm{e}^{\mathrm{i}\varphi_{j}}$ is a phase factor.
Then we can compute $\widetilde{h}$ in this nonlinear coordinate system obtaining:
$$
\widetilde{h}=\frac{1}{4}\left[\langle \mathrm{d}(\ln \vec{\mathrm{p}})\otimes \mathrm{d}(\ln \vec{\mathrm{p}})\rangle_{\vec{\mathrm{p}}} - \langle \mathrm{d}(\ln \vec{\mathrm{p}})\rangle_{\vec{\mathrm{p}}}\otimes \langle\mathrm{d}(\ln \vec{\mathrm{p}})\rangle_{\vec{\mathrm{p}}} \right] +
$$
\begin{equation}
+ \langle \mathrm{d}\vec{\varphi} \otimes\mathrm{d}\vec{\varphi}\rangle_{\vec{\mathrm{p}}} - \langle \mathrm{d}\vec{\varphi}\rangle_{\vec{\mathrm{p}}} \otimes \langle \mathrm{d}\vec{\varphi}\rangle_{\vec{\mathrm{p}}} + \frac{\mathrm{i}}{2}\left[\langle \mathrm{d} \left(\ln \vec{\mathrm{p}}\right)\wedge\mathrm{d}\vec{\varphi}\rangle_{\vec{\mathrm{p}}} - \langle \mathrm{d}\left(\ln \vec{\mathrm{p}}\right)\rangle_{\vec{\mathrm{p}}}\wedge\langle\mathrm{d}\vec{\varphi}\rangle_{\vec{\mathrm{p}}}\right]\,,
\end{equation}
where $\langle\, \cdot\, \rangle_{\vec{\mathrm{p}}}$ denotes the expectation value with respect to the probability vector $\vec{\mathrm{p}}$.
Note that (the pullback of) $g$ is composed of two terms, the first one is equivalent to the Fisher-Rao metric on the space of probability vectors $(\mathrm{p}_{1}\,...\,\mathrm{p}_{n})$, while the second term can be interpreted as a quantum contribution to the Fisher-Rao metric due to the phase of the state \cite{Fac10}.

Given a smooth function $f\in \mathcal{F}(\car)$, we denote by $X_f$, $Y_f$ the vector fields given respectively by: $X_f  = \Lambda (df)$, and $Y_f = R(df)$, where $\Lambda=\omega^{-1}$ and $R=g^{-1}$.   The vector fields $X_f$ will be called Hamiltonian vector fields and $Y_f$, gradient vector fields. 
Note that the compatibility condition among $\omega,g$ and $J$ allows us to write $Y_{f}=J(X_{f})$.

The special unitary group $SU(\mathcal{H})$ acts naturally on $\car=\mathbb{CP}(\mathcal{H})$ by means of isometries of the K\"{a}hler structure.
The infinitesimal version of this action is encoded in a set of Hamiltonian vector fields $\{X_{A} \mid \mathbf{A} \in\mathfrak{su}(\mathcal{H})\}$ such that they close on a realization of the Lie algebra $\mathfrak{su}(\mathcal{H})$ of $SU(\mathcal{H})$.
This means that, given $\mathbf{A},\mathbf{B}\in\mathfrak{su}(\mathcal{H})$, there are Hamiltonian vector fields $X_{A},X_{B}$ on $\car$ such that $[X_{A},X_{B}]=-X_{[A,B]}$  \cite{Abr12}.

The fact that $SU(\mathcal{H})$ acts preserving the K\"{a}hler structure means that the Hamiltonian vector fields for the action preserve $\omega,g$ and $J$, that is, $\mathcal{L}_{X_{A}}\omega=\mathcal{L}_{X_{A}}g=\mathcal{L}_{X_{A}} J=0$ for every $X_{A}$.
Note that this is not true for a Hamiltonian vector field $X_{f}$ associated with a generic smooth function $f$ on $\car$.

It is interesting to note that the Hamiltonian vector fields $X_{A}$ together with the gradient vector fields $Y_{A}=J(X_{A})$ close on a realization of the Lie algebra $\mathfrak{sl}(\mathcal{H})$, that is, the Lie algebra of the complex special linear group $SL(\mathcal{H})$ which is the complexification of $SU( \mathcal{H})$.
In order to see this, we recall the definition of the Nijenhuis tensor $N_{J}$ associated with the complex structure $J$ (see definition $2.10$, and equation $2.4.26$ in \cite{Mor90}):
\begin{equation}
N_{J}(X,Y)=\left(\mathcal{L}_{J(X)}(T)\right)(Y) - \left(J\circ \mathcal{L}_{X}(J)\right)(Y)\,,
\end{equation}
where $X,Y$ are arbitrary vector fields on $\car$.
A fundamental result in the theory of complex manifold is that the $(1,1)$-tensor field $J$ defining the complex structure of a complex manifold must have vanishing Nijenhuis tensor \cite{Nir57}.
This means that the complex structure $J$ on $\car$ is such that $N_{J}=0$, which means:

\begin{equation}
\left(\mathcal{L}_{J(X)}(J)\right)(Y) = \left(J\circ \mathcal{L}_{X}(J)\right)(Y)\,,
\end{equation}
where $X,Y$ are arbitrary vector fields on $\car$.
In particular, if we consider the Hamiltonian vector field $X_{A}$, we know that $\mathcal{L}_{X_{A}}\,J=0$, and thus:

\begin{equation}\label{eqn: lie derivative of complex structure on isospectral states with respect to gradient vector fields}
\left(\mathcal{L}_{J(X_{A})}(J)\right)(Y) = 0
\end{equation}
for every vector field $Y$ on $\car$.
Eventually, we prove the following:

\begin{proposition}\label{prop: SL(H) on homogenous spaces of SU(H)}
Let $\mathbf{A},\mathbf{B}$ be generic elements in the Lie algebra $\mathfrak{su}(\mathcal{H})$ of $SU(\mathcal{H})$
The Hamiltonian and gradient vector fields $X_{A},X_{B},Y_{A},Y_{B}$ on $\car$ close on a realization of the Lie algebra $\mathfrak{sl}(\mathcal{H})$, that is, the following commutation relations among Hamiltonian and gradient vector fields hold:

\begin{equation}
[X_{A}\,,X_{B}]=-X_{[A\,,B]}\,,\;\;\;\;\;[X_{A}\,,Y_{B}]=- Y_{[A\,,B]}\,,\;\;\;\;\;[Y_{A}\,,Y_{B}]=X_{[A\,,B]}\,.
\end{equation}
\end{proposition}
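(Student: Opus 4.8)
The plan is to bootstrap everything from the single relation $[X_{A},X_{B}]=-X_{[A,B]}$, which we already have, together with two structural inputs already established above: that the Kähler data are invariant along the fields coming from the $SU(\mathcal{H})$-action (so $\mathcal{L}_{X_{A}}J=0$), and that the complex structure is integrable, $N_{J}=0$. The only tool needed besides these is the Leibniz property of the Lie derivative under contraction: for any vector fields $Z,W$ one has $\mathcal{L}_{Z}\big(J(W)\big)=(\mathcal{L}_{Z}J)(W)+J\big(\mathcal{L}_{Z}W\big)=(\mathcal{L}_{Z}J)(W)+J([Z,W])$.

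First I would derive the mixed relation. Taking $Z=X_{A}$, $W=X_{B}$ and using $\mathcal{L}_{X_{A}}J=0$ together with $Y_{B}=J(X_{B})$ gives
\[
[X_{A},Y_{B}]=\mathcal{L}_{X_{A}}\big(J(X_{B})\big)=J([X_{A},X_{B}])=-J(X_{[A,B]})=-Y_{[A,B]}\,,
\]
which is the second identity in the statement.

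Next I need the fact that the \emph{gradient} fields also preserve $J$, i.e. $\mathcal{L}_{Y_{A}}J=0$. This is exactly what the vanishing of the Nijenhuis tensor buys us: specializing $N_{J}=0$ to $X=X_{A}$ and invoking $\mathcal{L}_{X_{A}}J=0$ yields $(\mathcal{L}_{Y_{A}}J)(Y)=(\mathcal{L}_{J(X_{A})}J)(Y)=0$ for every $Y$, which is the displayed equation just before the proposition. With this in hand I apply the Leibniz rule once more, now with $Z=Y_{A}$, $W=X_{B}$, and use the previously derived mixed relation in the form $[Y_{A},X_{B}]=-[X_{B},Y_{A}]=Y_{[B,A]}=-Y_{[A,B]}$:
\[
[Y_{A},Y_{B}]=\mathcal{L}_{Y_{A}}\big(J(X_{B})\big)=J([Y_{A},X_{B}])=-J(Y_{[A,B]})=-J^{2}(X_{[A,B]})=X_{[A,B]}\,,
\]
using $J^{2}=-\mathrm{Id}$. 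Together with the assumed relation $[X_{A},X_{B}]=-X_{[A,B]}$ this is the full set of commutators. To conclude I would remark that, viewing $\mathfrak{sl}(\mathcal{H})=\mathfrak{su}(\mathcal{H})\oplus\imath\,\mathfrak{su}(\mathcal{H})$ as a real Lie algebra, the three bracket relations are precisely those of $\mathfrak{sl}(\mathcal{H})$ under $\mathbf{A}+\imath\mathbf{B}\mapsto X_{A}+Y_{B}$, up to the overall sign customary in vector-field realizations of Lie group actions, so the real span of $\{X_{A},Y_{A}\}$ is indeed a realization of $\mathfrak{sl}(\mathcal{H})$.

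The computation itself is short; the one genuinely substantive point — and the place where I expect the ``content'' to sit — is the implication $\mathcal{L}_{X_{A}}J=0\Rightarrow\mathcal{L}_{Y_{A}}J=0$, which rests entirely on integrability of $J$ (vanishing Nijenhuis tensor). Without that, the gradient fields would fail to preserve $J$ and the last two brackets would pick up correction terms involving $(\mathcal{L}_{Y_{A}}J)(X_{B})$. Everything else is bookkeeping: keeping the Leibniz identity, the antisymmetry signs in $[X_{A},Y_{B}]$ versus $[Y_{B},X_{A}]$, and $J^{2}=-\mathrm{Id}$ straight.
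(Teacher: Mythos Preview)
Your proof is correct and follows essentially the same route as the paper: the first bracket is taken from the $SU(\mathcal{H})$-action, the mixed bracket is obtained from the Leibniz rule together with $\mathcal{L}_{X_{A}}J=0$, and the gradient--gradient bracket uses the consequence $\mathcal{L}_{J(X_{A})}J=0$ of the vanishing Nijenhuis tensor plus $J^{2}=-\mathrm{Id}$. Your identification of the integrability of $J$ as the one substantive ingredient is exactly the point the paper is making as well.
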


\begin{proof}
The first commutator follows directly from the fact that there is a left action of $SU(\mathcal{H})$ on $\car$ of which the Hamiltonian vector fields $X_{A}$ are the fundamental vector fields.
Regarding the second commutator, we recall that $Y_{\mathbf{A}}=J(X_{\mathbf{A}})$ and  that $\mathcal{L}_{X_{\mathbf{A}}}\,J=0$, so that:

\begin{equation}
\begin{split}
[X_{\mathbf{A}}\,,Y_{\mathbf{B}}]=&\mathcal{L}_{X_{\mathbf{A}}}\,\left(J(X_{\mathbf{B}})\right)= \\
=&\left(\mathcal{L}_{X_{\mathbf{A}}}\,J\right)(X_{\mathbf{B}}) + J\left(\mathcal{L}_{X_{\mathbf{A}}}\, X_{\mathbf{B}}\right)=\\
=& J\left([X_{\mathbf{A}}\,, X_{\mathbf{B}}]\right)=- Y_{ [\mathbf{A}\,,\mathbf{B}]}
\end{split}
\end{equation}
as claimed.
Finally, using equation \eqref{eqn: lie derivative of complex structure on isospectral states with respect to gradient vector fields} together with the fact that $J\circ J=-\mathrm{Id}$ because it is a complex structure, we obtain:

\begin{equation}
\begin{split}
[Y_{\mathbf{A}}\,,Y_{\mathbf{B}}]=& \mathcal{L}_{J(X_{\mathbf{A}})}\,\left(J(X_{\mathbf{B}})\right)=\\
=& \left(\mathcal{L}_{J(X_{\mathbf{A}})}(J)\right)X_{\mathbf{B}} + J\left(\mathcal{L}_{J(X_{\mathbf{A}})}X_{\mathbf{B}}\right)=\\
=&J\left([Y_{\mathbf{A}}\,,X_{\mathbf{B}}]\right)=X_{[\mathbf{A}\,,\mathbf{B}]}
\end{split}
\end{equation}
as claimed. \qed
\end{proof}

Since $\car$ is a compact manifold, all vector fields are complete, in particular, the Hamiltonian and gradient vector fields of Prop. \ref{prop: SL(H) on homogenous spaces of SU(H)} are complete.
This means that the realization of the Lie algebra $\mathfrak{sl}(\mathcal{H})$ integrates to an action of $SL(\mathcal{H})$ on $\car$.
We will see that this action on $\car$ allows us to define an action of $SL(\mathcal{H})$ on the space $\stsp$ of quantum states.

\begin{remark}
Instead of the complex projective space, we may as well have started with a generic homogeneous space of $SU(\mathcal{H})$ as a carrier manifold.
Every such manifold is a compact K\"{a}hler manifold, and the Hamiltonian and gradient vector fields associated with elements in $\mathfrak{su}(\mathcal{H})$  close on a realization of the Lie algebra of $SL(\mathcal{H})$ which integrates to a group action.
Indeed, all we need to prove an analogue of Prop. \ref{prop: SL(H) on homogenous spaces of SU(H)} is a K\"{a}hler manifold on which $SU(\mathcal{H})$ acts by means of isometries of the K\"{a}hler structure.

The complex projective space may be selected requiring the holomorphic sectional curvature of $\car$ to be constant and positive.
Indeed, from the  Hawley-Isuga Theorem \cite{Ha53}, \cite{Ig54}, it follows that complex projective spaces are the only (connected and complete) K\"ahler manifolds of constant and positive holomorphic sectional curvature (in our setting equal to $2/\hbar$) up to K\"{a}hler isomorphisms.
\end{remark}

\vsp
{\bfseries Observables } are real functions $f\in\mathcal{F}(\car)$ satisfying:
\begin{equation}\label{kahlerian}
\mathcal{L}_{X_f} R = 0 \, ,
\end{equation}
i.e., such that the Hamiltonian vector fields defined by them are isometries for the symmetric tensor $R=g^{-1}$. 
In particular, if $F$ is a complex-valued function on $\car$ generating a complex-valued Hamiltonian vector field $X_F$ which is Killing for $g$ (hence for $R$), then  there necessarily exist $a$, $b$ Hermitean operators such that \cite{Asht99, Ci90, Erc10, Sk02}:

\begin{equation}
F ([\psi]) = \frac{\langle \psi | a | \psi\rangle}{\langle\psi|\psi\rangle} + \imath  \frac{\langle \psi | b | \psi\rangle}{\langle\psi|\psi\rangle}\,.
\end{equation}
This result is interesting but not unexpected, Hamiltonian vector fields are infinitesimal generators of symplectic transformations.  If they also preserve the  Euclidean metric, they must be infinitesimal generators of rotations, then the intersection of symplectic and rotations are unitary transformations, whose infinitesimal generators are (skew) Hermitean matrices. 
From what we have just seen it follows that the {\bfseries observables } can be identified with the expectation-value functions:

\begin{equation}
e_{a} ([\psi]) = \frac{\langle \psi | a | \psi\rangle}{\langle\psi|\psi\rangle} 
\end{equation}
with $a$ a Hermitian operator on $\mathcal{H}$ (notice that, consistently, $A = \imath\,  a$ is an element in the Lie algebra $\mathfrak{su}(\mathcal{H})$ of the unitary group $SU(\mathcal{H})$). 
We will denote the family of observables as $\mathcal{K}(\car)$ or simply $\mathcal{K}$ for short.

We find out that, under adequate conditions, the family of functions $\mathcal{K}$  constitutes a Lie-Jordan algebra.  
Indeed, the space of K\"ahlerian functions, that is, those satisfying condition (\ref{kahlerian}) above, because of Hawley-Igusa theorem carries a natural $C^*$-algebra structure and its real part a Lie-Jordan one (\cite{Ha53}, \cite{Ig54}, \cite{Bo47}, \cite[Thm. 7.9]{Ko69}).
By using a GNS construction for the $C^*$-algebra we get a Hilbert space, returning to the Dirac-Schr\"odinger picture.

\vsp

By using $\Lambda$ ($\omega$) and $R$ ($g$) we can define the following brackets among functions on $\car$:
\begin{equation}
\{f_{1},f_{2}\}:=\Lambda(\mathrm{d}f_{2},\mathrm{d}f_{1})=\omega(X_{f_{1}},X_{f_{2}})=X_{f_{2}}(f_{1})\,,
\end{equation}
\begin{equation}
(f_{1},f_{2}):=R(\mathrm{d}f_{1},\mathrm{d}f_{2})=g(Y_{f_{1}},Y_{f_{2}})\,.
\end{equation}
The antisymmetric bracket $\{\cdot,\cdot\}$ is a Poisson bracket since it is defined starting from a symplectic form.
Furthermore, being $[X_{f_{1}},X_{f_{2}}]=-X_{\{f_{1},f_{2}\}}$ for every smooth functions $f_{1},f_{2}$ on $\car$, and since $[X_{A},X_{B}]=-X_{[A,B]}$ for the Hamiltonian vector fields associated with $A,B\in\mathfrak{su}(\mathcal{H})$, we have:
\begin{equation}\label{XAB}
-X_{[A,B]}=[X_{A},X_{B}]=[X_{f_{a}},X_{f_{b}}]=-X_{\{f_{a},f_{b}\}}\,,
\end{equation}
where we have switched the notation $e_a$ to $f_a$ to make formulas more familiar and readable.
From (\ref{XAB}) it follows:
\begin{equation}
\{f_{a},f_{b}\}=f_{\imath[a,b]}\,,
\end{equation}
where we used the fact that $A=\imath a$ and $B=\imath b$.
This means that $(\mathcal{K}(\car),\{\cdot,\cdot\})$ is a Lie algebra.

On the other hand,  a direct computation \cite{Erc10} shows that: 
\begin{equation}
(f_{a},f_{b}):=R(\mathrm{d}f_{a},\mathrm{d}f_{b})=g(Y_{a},Y_{b})=f_{a\odot b} -f_{a}\cdot f_{b}\,,
\end{equation}
where $a\odot b=ab + ba$.
Then, we may define the symmetric bracket:
\begin{equation}
<f_{1},f_{2}>:=(f_{1},f_{2}) + f_{1}\cdot f_{2}
\end{equation}
so that on the subspace of {\bfseries observables} we have:
\begin{equation}
<f_{a},f_{b}>=f_{a\odot b}\,.
\end{equation}
Because of the properties of the symmetric product $\odot$ on Hermitean operators, the bracket $<\cdot,\cdot>$ turns out to be a Jordan product.
Furthermore, the set of {\bfseries observables} endowed with the antisymmetric product $\{\cdot,\cdot\}$ and the symmetric product $<\cdot,\cdot>$ is a Lie-Jordan algebra \cite{Ci17b, Ci17c, Fa12}.
By complexification, that is, considering complex-valued functions $F_{A}=f_{a_{1}} + \imath f_{a_{2}}$ for some Hermitean $a_{1},a_{2}$, we obtain a realization of the $C^{*}$-algebra $\mathcal{B}(\mathcal{H})$ by means of smooth functions on $\car=\mathbb{CP}(\mathcal{H})$ according to \cite{Ci90, Erc10}:
\begin{equation}
\begin{split}
F_{A}\star F_{B} &:= \frac{1}{2}\left(F_{A}\cdot F_{B} +  (F_{A},F_{B}) + \imath \{F_{A},F_{B}\}\right)=\\
&= \frac{1}{2}\left(<F_{A},F_{B}> + \imath \{F_{A},F_{B}\}\right)=F_{AB}\,.
\end{split}
\end{equation}
We may extend this product to arbitrary complex-valued functions obtaining a $\star$-product.

\vsp

Because we are in finite dimensions we can consider the critical points of the observables (that is, expectation value functions).  
An observable is said to be generic if all critical points are isolated.
The values of the observable function at these critical points constitute the spectrum of the observable.
The set of critical point of a generic observable may be thought of as the geometrical version adapted to $\car\equiv\mathbb{CP}(\mathcal{H})$ of an orthogonal resolution of the identity on $\mathcal{H}$.
If a critical point is not isolated, the critical set is actually a submanifold   of (real) even dimension.
If the observable has value zero in some critical set, this set is a complex projective space.

We postpone a complete discussion of the critical values of a given observable and restict our analysis to generic observables.
With the help of any generic observable we can now define {\bfseries quantum states}.
The space $\stsp$ of {\bfseries quantum states} is a subset of $\mathcal{K}$ whose elements are defined as follows.
A function in $\mathcal{K}$ will define a state if its evaluation on the set of isolated critical points of any generic observable will be a probability distribution on $n$-elements, i.e., a discrete probability distribution.
In a certain sense, we may think of quantum states (in finite dimensions) as a sort of noncommutative generalization of discrete probability distributions.
Essentially, {\bfseries quantum states} are identified with  the expectation-value functions
$$
\mathrm{e}_{\rho}([\psi])=\frac{\langle\psi|\rho|\psi\rangle}{\langle\psi|\psi\rangle}
$$
associated with density operators, that is, $\rho\in\mathcal{B}(\mathcal{H})$, $\rho=\rho^{\dagger}$, $\langle\psi|\rho|\psi\rangle\geq0$ for all $|\psi\rangle\in\mathcal{H}$,  and $\mathrm{Tr}(\rho)=1$.
In the infinite-dimensional case $\rho$ must be trace-class in order for this last requirement to make sense.

On the other hand, the expectation value function associated with a quantum state will define a ``continuous'' probability distribution on the carrier space provided by the complex projective space.
Essentially, a quantum state is identified with an observable (expectation value function) $\mathrm{e}_{\rho}\in\mathcal{K}$ such that $\mathrm{e}_{\rho}([\psi])\geq0$ for all $[\psi]\in\car$ ($\rho\in\mathcal{B}(\mathcal{H})$ is a positive semidefinite operator), and ($\mathrm{Tr\,}\rho =1$):
\begin{equation}
\int_{\car}\mathrm{e}_{\rho}\,\mathrm{d}\nu_{\omega}=1\,,
\end{equation}
where $\mathrm{d}\nu_{\omega}=\omega^{n}$ is the symplectic volume form normalized by $\int_{\car}\mathrm{d}\nu_{\omega}=1$.
This point of view would be closer to the point of view taken by Gelfand and Naimark to define states as functions  of positive-type in the group algebra of any Lie group.  
They would be of positive-type when pulled back to the group from the homogeneous space.
It is clear that they form a convex body whose extremals are the pure quantum states.

In this context, the pairing map between quantum states and observables given by:
\begin{equation}
\mathrm{E}(\mathrm{e}_{\rho},f_{a})=\int_{\car}\,f_{a}\,\mathrm{e}_{\rho}\,\mathrm{d}\nu_{\omega}
\end{equation}
is interpreted as the mean value for the outcome of a measurement of the observable $f_{a}$ on the quantum state $\mathrm{e}_{\rho}$.

\begin{remark}
In the infinite-dimensional case we must pay attention to topological and measure-theoretical issues since {\bfseries quantum states} are required to be measurable with respecto to the symplectic measure $\nu_{\omega}$, while {\bfseries observables} are not.
\end{remark}

We may define the following map:

\begin{equation}
\car\ni\,[\psi]\mapsto \rho_{\psi}:=\frac{|\psi\rangle\langle\psi|}{\langle\psi|\psi\rangle}\in\mathcal{B}(\mathcal{H})\,.
\end{equation}
This map allows us to identify the points of $\car$ with rank-one projectors on $\mathcal{H}$, and, since rank-one projector are density operators, we identify the points in the carrier space $\car$ with particular quantum states.
These quantum states are precisely the extremal points of the convex set $\stsp$ of all quantum states, that is, pure quantum states.
In this context, the expectation value function $e_{\rho_{\Psi}}$ associated with the pure quantum state $\rho_{\Psi}$ encodes the transition probabilities between the normalized vector $|\tilde{\Psi}\rangle$ associated with $|\Psi\rangle$ and every other normalized vector $|\tilde{\Phi}\rangle$ in $\mathcal{H}$:
\begin{equation}
\mathrm{e}_{\rho_{\Psi}}([\Phi])=\frac{\langle\Phi|\rho_{\Psi}|\Phi\rangle}{\langle\Phi|\Phi\rangle}=\frac{\langle\Phi|\Psi\rangle\langle\Psi|\Phi\rangle}{\langle\Phi|\Phi\rangle \langle\Psi|\Psi\rangle}=|\langle\tilde{\Psi}|\tilde{\Phi}\rangle|^{2} \, .
\end{equation}

\vsp

Recalling that a quantum state is a positive function on $\car$, that is, $\mathrm{e}_{\rho}\geq 0$,  we can define the rank of a {\bfseries quantum state} as the codimension of the closed submanifold $\mathrm{e}_{\rho}^{-1}(0)\subset\car$.
With this definition, it is clear that the rank is  invariant under the group of diffeomorphisms.
As a matter of fact it is possible to show that the complex special Lie group $SL(\mathcal{H})$ acting on $\car$ by means of diffeomorphisms acts transitively on the space of states with the same rank, providing in this way a stratification of the space of states.
To be able to change the rank of a state, to describe decoherence,we need to use semigroups.

Writing $|\psi_{G}\rangle\equiv G|\psi\rangle$ with $G\in SL(\mathcal{H})$, the action of the special linear group $SL (\mathcal{H})$ on the carrier space $\car$ reads:
\begin{equation}
[G]\colon [\psi]\mapsto [G]([\psi])= [\psi_{G}]\,.
\end{equation}
In terms of the rank-one projector $\rho_{\psi}$ we have:
\begin{equation}
\rho_{\psi}\mapsto G\cdot\rho_{\psi}= \frac{G|\psi\rangle\langle\psi|G^{\dagger}}{\langle\psi|G^{\dagger}\,G|\psi\rangle}=\frac{G^{\dagger}\rho_{\psi} G}{\mathrm{Tr}(G^{\dagger}\rho_{\psi} G)}\,.
\end{equation}
We may generalize this action to any density operator by setting:
\begin{equation}\label{Grho}
G\cdot\rho=\frac{G^{\dagger}\rho G}{\mathrm{Tr}(G^{\dagger}\rho G)}\,.
\end{equation}
However, because the action is nonlinear this is an assumption that can not derived from the action on rank-one projectors.
By means of this action we would get an orbit of density operators and thus an orbit of probability distributions once we identify the density operators with their associated expectation-value functions.  
Each orbit being characterised by the rank of $\rho$.   For a system with $n$ levels ($\dim \mathcal{H} = n$) we would get $n$ different orbits.    
The one of maximal dimension would be the bulk, while the boundary of the closed convex body $\stsp$ of quantum states would be the union of orbits of dimensions less than $n$.   
The geometry of $\stsp$ as developed in \cite{Ci17b, Ci17c, Gr05} will be exposed in Sect. \ref{subsec: Quantum states and open systems}.

\vsp

The {\bfseries statistical interpretation} of the theory is provided by a geometric measure.   
The idea is to extend the notion of spectral measure to a geometric manifold as it was proposed for instance by Skulimowski \cite{Sk02} in the case of the complex projective space $\car$.  
Thus we may use a slightly extended notion defined as: a geometric positive-operator-valued measure (GPOV-measure) on a space of states of a geometric quantum theory is a map $p \colon \mathcal{B}(\mathbb{R}) \to \mathcal{K}(\car)$ (where $\mathcal{B}(\mathbb{R})$ denotes the $\sigma$-algebra of Borelian sets in $\mathbb{R}$) such that:

\begin{enumerate}
\item Positivity monotonicity and normalization: $$0 \leq p(\emptyset)([\psi]) \leq p(\Delta)([\psi]) \leq p (\mathbb{R}) ([\psi]) = 1\, .$$

\item Additivity: $\mu$ is additive, i.e., 
$$
p(\cup_{k= 1}^n \Delta_k) ([\psi]) = \sum_{k= 1}^n p(\Delta_k)  ([\psi])\, ,
$$ 
for all $[\psi]\in\car$, $n \in \mathbb{N}$, $\Delta_k$, $k = 1, \ldots, n$, disjoint Borel sets on $\mathbb{R}$.
\end{enumerate}

Thus, consider for instance a GPOV-measure $p$ with finite support, $\mathrm{supp\,} p = \{ \lambda_1, \ldots, \lambda_r \}$, then the statistical interpretation of the theory will be provided, as in the standard pictures, by the probability distribution $\mathrm{p}_k ([\psi]) = p(\{ \lambda_k\}) ([\psi]) \geq 0$, $\sum_{k} \mathrm{p}_k ([\psi]) = 1$.  

In general a GPOV-measure $p$ will be provided by any observable $e_a$ by means fo the corresponding spectral measure $E_a(d\lambda)$ associated to the Hermitean operator $a$, that
is 
$$
p(\Delta)([\psi]) = \int_\Delta \mathrm{Tr\, }(\rho_\psi)E(d\lambda)) = \int_\Delta \frac{\langle \psi |E(d\lambda) | \psi \rangle }{\langle \psi | \psi \rangle} \, ,
$$
in accordance with the probabilistic interpretation of a physical theory, Eq. (\ref{measure}), and the standard pictures, Eq. (\ref{Pdelta}).
\vsp

 {\bfseries Hamiltonian evolution}, or evolution of closed systems, will be defined by the Hamiltonian vector field $X_h$ associated with the observable $h$, that is:
$$ 
\frac{df}{dt} = X_h(f) \, .
$$
We call the observable $h$ the Hamiltonian function for the evolution.

 {\bfseries The composition of systems } will be discussed in Section \ref{composition}.

\subsection{Quantum states and open systems}\label{subsec: Quantum states and open systems}

The geometry of $\stsp$  as a closed convex body in the affine ambient space $\mathfrak{T}_{1}$ of Hermitean operators on $\mathcal{H}$ with trace equal to $1$ has been extensively developed in \cite{Ci17b, Ci17c}.
In these works, it is shown that there exist two bivector fields $\Lambda$ and $\mathcal{R}$ on $\mathfrak{T}_{1}$ by means of which the infinitesimal version of the action of $SL(\mathcal{H})$ on $\stsp$ may be recovered in terms of Hamiltonian and gradient-like vector fields.
In this case, the Poisson bivector field $\Lambda$ does not come from a symplectic structure, and the symmetric bivector field $\mathcal{R}$ is not invertible (there is no metric tensor $g=\mathcal{R}^{-1}$).

\subsubsection{The Qubit}

We will briefly recall here the results of \cite{Ci17b, Ci17c} concerning the geometry of the space of all states, pure or mixed for the qubit.  
Every 2 by 2 Hermitean matrix $A$ may be written in the form:
$$
A = \left[ \begin{array}{cc}  x_0 + x_3 & x_1 - ix_2 \\ x_1 + ix_2 & x_0 - x_3 \end{array}  \right] \, ,
$$
or, written as combination of Pauli matrices:
$$
\sigma_0 = \left[ \begin{array}{cc}  1 & 0 \\ 0 & 1 \end{array}  \right] \, ,\quad
\sigma_1 = \left[ \begin{array}{cc}  0 & 1 \\ 1 & 0 \end{array}  \right] \, ,\quad
\sigma_2 = \left[ \begin{array}{cc}  0 & -i \\ i & 0 \end{array}  \right] \, ,\quad
\sigma_3 = \left[ \begin{array}{cc}  1 & 0 \\ 0 & -1 \end{array}  \right] \, ,
$$
we get:
$$
A =  a_0 \sigma_0 +  a_1 \sigma_1 + a_2 \sigma_2 + a_3 \sigma_3 \, .
$$
In particular it is well known that any density operator $\rho$, that is $\mathrm{Tr\, } \rho = 1$, $0 \leq \rho^2 \leq \rho$ can be written as: 
$$
\rho = \frac{1}{2} (\sigma_0 + \mathbf{x}\cdot \boldsymbol{\sigma}) \, , \quad || \mathbf{x} || \leq 1 \, .
$$
Thus the space $\mathcal{S}$ of all qubit states is the Bloch's ball in $\mathbb{R}^3$:
$$
\mathcal{S} = \{ \mathbf{x} \in \mathbb{R}^3 \mid x_1^2 + x_2^2 + x_3^2 \leq 1 \} \, .
$$

\begin{remark}
In the $n$-dimensional case $\mathcal{H}\cong\mathbb{C}^{n}$, this construction allows us to identify pure states as rank-one projectors in $\mathcal{B}(\mathcal{H})$.
However, they will only be a closed portion of the $2(n-1)$-dimensional unit sphere in $\mathbb{R}^{2n -1}$.
\end{remark}

The tensor field $\Lambda$ in the coordinates $x_1,x_2,x_3$ reads:

\begin{equation}\label{Lambda}
\Lambda = \epsilon_{ijk} x_i \frac{\partial}{\partial x_j} \wedge  \frac{\partial}{\partial x_k} \, ,
\end{equation}
while the symmetric tensor field $\mathcal{R}$ is given by:
$$
\mathcal{R} = \delta_{jk} \frac{\partial}{\partial x_j} \otimes \frac{\partial}{\partial x_k} - x_jx_k  \frac{\partial}{\partial x_j} \otimes \frac{\partial}{\partial x_k} \, .
$$

\begin{remark}[On the bivector $\Lambda$]

The choice of the bivector $\Lambda$ requires a comment.
If we identify $\mathbb{R}^3$ with the dual of the Lie algebra of $SU(2)$, we can consider $\mathbf{x} = (x_1, x_2, x_3)$ as the linear functions on the dual of the Lie algebra $\mathfrak{su}(2)$ of $SU(2)$.  Therefore the Lie bracket of $\mathfrak{su}(2)$ induces a Poisson bracket on $\mathfrak{su}(2)^*$ whose Poisson tensor is given by $\Lambda$.   Notice that $SU(2)$ is the subgroup of unitary operators of determinant one of the group of unitary operators of $\mathcal{H} = \mathbb{C}^2$.    

An alternative way of deriving $\Lambda$ is to consider the projection map $S^3 \to S^2$ related with the momentum map associated with the symplectic action of the unitary group on the Hilbert space $\mathcal{H}$.   Such map $\mu \colon \mathcal{H} \to \mathfrak{su}(2)^*$ provides a symplectic realization of the Poisson manifold $\mathfrak{su}(2)^*$.  
\end{remark}

In this context, observables correspond to affine functions on $\stsp$, that is, $f_{a} = a^j x_j + a_{0}$, $a_{0},a_j \in \mathbb{R}$.
Consequently, the Hamiltonian vector fields $X_a = \Lambda (df_{a}, \cdot )$, and the gradient-like vector fields $Y_{a} = \mathcal{R}(df_{a}, \cdot )$ are given by:
$$
X_{a} = \epsilon_{jkl} a^j x_k \frac{\partial}{\partial x_l} \, , \qquad Y_{a} = a^j  \frac{\partial}{\partial x_j} - a^k x_k \Delta \, ,
$$
with $\Delta = x_j \partial /\partial x_j$ the dilation vector field on $\mathbb{R}^3$.
Lie algebra generated by  the family of vector fields $X_f$, $Y_f$ is the Lie algebra $SL(2, \mathbb{C})$.

It is now possible to construct a Lie-Jordan algebra (see for instance \cite{Ci17b, Ci17c, Fa12}) with commutative Jordan product $\circ$ and Lie product $\{ \cdot, \cdot \}$ on the space of observables (affine functions) out of the tensors $\mathcal{R}$ and $\Lambda$.  Such algebra is defined by:
$$
x_j \circ x_k = \mathcal{R}(dx_j, dx_k) + x_jx_k \, , \quad \{ x_j, x_k\} = \Lambda(dx_j, dx_k) \, .
$$
Then we find:
$$
x_j \circ x_j = 1 \, , \quad x_j \circ x_k = 0 \, , \quad \forall j\neq k \, .
$$
Combining the Jordan product and the Lie product we can define:
$$
x_j \star x_k = x_j \circ x_k + i \{x_j, x_k \}
$$
and we get:
$$
x_j \circ x_k = \frac{1}{2}(x_j \star x_k + x_k \star x_j) \, , \quad \{ x_j, x_k\} = -\frac{i}{2}(x_j \star x_k - x_k \star x_j) 
$$

The involution * will be complex conjugation and we get a $C^*$-algebra which can be used either to go back to the Hilbert space via de GNS construction or to go back to the Heisenberg picture  if we realise the algebra in terms of operators.

Let us remark that as our algebras are described by means of tensor fields, it is evident that the particular coordinate system we use to describe the ball does not play any role.   The convexity structure may well become hidden.
For instance, parametrising Bloch's ball with spherical coorodinates $(r,\theta, \varphi)$, the relevant tensor fields would be:
 $$
\mathcal{R} = (1- r^2) \frac{\partial}{\partial r}\otimes \frac{\partial}{\partial r} + \frac{1}{r^2}\frac{\partial}{\partial \theta} \otimes \frac{\partial}{\partial \theta} + \frac{1}{r^2 \sin^2\theta} \frac{\partial}{\partial \varphi} \otimes \frac{\partial}{\partial \varphi} \, ,
$$
and
$$
\Lambda = \frac{1}{r \sin \theta} \frac{\partial}{\partial \theta} \wedge \frac{\partial}{\partial \varphi} \, .
$$ 
It is now clear by inspection that Hamiltonian vector fields and gradient vector fields are tangent to the sphere of pure states $S^2 = \{r = 1 \}$. 
The interior of the ball is an orbit of the group  $SL(2, \mathbb{C})$ and it is generated by the functions $r\cos \theta$, $r \sin \theta \sin \varphi$ and $r \sin \theta \cos \varphi$ by means of $\mathcal{R}$ and $\Lambda$.

To describe decoherence one needs vector fields which are generators of semigroups so that they will be directed vector fields not vanishing on the sphere of  pure states.

\subsubsection{Open quantum systems: the GKLS equation}

Let us consider the Kossakowski-Lindblad equation (see for instance \cite{Ci17b} and references therein):
$$
\frac{d}{dt} \rho = L(\rho) \, ,
$$
with initial data $\rho(0) = \rho_0$ and,
\begin{eqnarray*}
 L(\rho) &=& - i [H, \rho] + \frac{1}{2} \sum_j ([V_j\rho, V_j^\dagger] + [V_j, \rho V_j^\dagger]) \\
 &=& - i [H, \rho] -  \frac{1}{2} \sum_j [V_j^\dagger V_j, \rho]_+ + \sum_j V_j \rho V_j^\dagger \, ,
\end{eqnarray*}
say with, $\mathrm{Tr\,\,}V_j = 0$, and $\mathrm{Tr\,}(V_j^\dagger V_k) = 0$ if $j\neq k$.
We see immediately that the equations of motion split into three terms:
 \begin{enumerate}
\item Hamiltonian term:  $  - i [H, \rho] $
 
\item Symmetric term (or gradient) : $  -  \frac{1}{2} \sum_j [V_j^\dagger V_j, \rho]_+$

\item Kraus term (or jump vector field):  $ \sum_j V_j \rho V_j^\dagger$.

\end{enumerate}
 
 It is possible to associate a vector field with this equation of motion \cite{Ci17c, Ci17b}.   
It turns out that the one associated with the Kraus term $Z$, is a nonlinear vector field, similar to the nonlinear vector field $Y$, associated with the symmetric tensor, the gradient vector field.
The nonlinearity pops up because the two maps are not trace preserving therefore we have to introduce a denominator for the map to transform states into states.
The ``miracle'' of the Kossakowski-Lindblad form of the equation is that the two nonlinearities cancel each other so that the resulting vector field is actually linear \cite{Ci17c, Ci17b}.
 
\begin{example}[The phase-damping of a q-bit]
Consider now:
 $$
 L(\rho) = - \gamma (\rho - \sigma_3 \rho \sigma_3) \, ,
 $$
 we find the vector field:
 $$
 Z_L = - 2\gamma \left( x_1 \frac{\partial}{\partial x_1} + x_2 \frac{\partial}{\partial x_2} \right) 
 $$
 which allows to visualise immediately the evolution.
\end{example}

\section{Composition of systems}\label{composition}

As we mentioned in the introductory remarks, the composition of two systems $A,B$ in the Dirac-Schr\"odinger picture is simply the tensor product $\mathcal{H}_A \otimes \mathcal{H}_B = \mathcal{H}_{AB}$.    If our starting input is the complex projective space $P(\mathcal{H})$, we cannot consider the Cartesian product $P(\mathcal{H}_A)\times P(\mathcal{H}_B)$ because this would not contain all the information of the composite system, it would not contain what Schr\"odinger called the principal characteristic of quantum mechanics: the entangled states.
According to our general procedure, we should associated with the composite system  the complex projective space related to $\mathcal{H}_{A}\otimes\mathcal{H}_{B}$.
It is easy to visualise the situation in the case of the qubit.   Here the complex projective space is $S^2$, for two qubits we would have $S^2 \times S^2$.  However if we take correctly the tensor product $\mathbb{C}^2 \otimes \mathbb{C}^2$ and then the associated complex projective space, we would get $P(\mathbb{C}^2 \otimes \mathbb{C}^2) = \mathbb{CP}^3$ which is six-dimensional and not four-dimensional as $S^2 \times S^2$.    The additional states account for the entangled states, while the immmersion of $S^2 \times S^2$ into $\mathbb{CP}^3$ would give the space of separable states.

A more intrinsic way would be to consider  the tensor product $\mathcal{A}_A \otimes \mathcal{A}_B = \mathcal{A}_{AB}$ of the $C^{*}$-algebras $\mathcal{A}_A$ and $\mathcal{A}_B$ of expectation value functions on the K\"{a}hler manifolds of the physical subsystems, use the GNS construction to build a Hilbert space on which the chosen completion $\overline{\mathcal{A}_A \otimes \mathcal{A}}_B$ would have an irreducible representation,  and the associated complex projective space should be considered to represent the composition of the two systems. 
Having the space describing  the composite system we could proceed as usual.


\subsection{Decomposing a system}

Given the $C^*$-algebra $\mathcal{A}_{AB}$ of the total system we may now look for the two $C^*$-algebras, say $\mathcal{A}_A$ and $\mathcal{A}_B$, of the original components as subalgebras of the total $C^*$-algebra.   We would ask of the subalgebras that they have in common only the identity and they commute with each other.  Moreover we require that $\mathcal{A}_A\otimes \mathcal{A}_B$, after completion, be isomorphic with the total algebra.   

To recover the states of the two subsystems we may define two projections, say:  $\pi_A \colon \mathcal{S}_{AB} \to \mathcal{S}_A$, $\pi_A(\rho) = \rho_A$, $\rho_A(a) = \rho (a \otimes 1_B)$, and  $\pi_B \colon \mathcal{S}_{AB} \to \mathcal{S}_B$, $\pi_B(\rho) = \rho_B$, $\rho_B(b) = \rho (1_A \otimes b)$, for all $a\in \mathcal{A}_A$, $b \in \mathcal{A}_B$, $\rho \in \mathcal{S}_{AB}$.

We find that $\rho_{AB} \neq \rho_A \otimes \rho_B$.   Indeed, the quantity $\mathrm{Tr\,} (\rho_{AB} - \rho_A \otimes \rho_B)^k$ for every $k$, say integer, would provide possible measures of entanglement.  

As a matter of fact both $\rho_A$ and $\rho_B$ are no more elements of the complex projective space associated to the two subsystems.  
They turn out to be, by construction, non-negative, Hermitian and normalised linear functionals, each one for the total $C^*$-algebra, that is, they are  mixed states.

If we consider a unitary evolution on the composite system, say $U \rho U^\dagger$, we could consider, for any trajectory $U(t) \rho_0 U(t)^\dagger$, the projection on the subsystem $\mathcal{A}$, say:
$$
\rho_A(t) (a) = (U(t) \rho_0 U(t)^\dagger) (a \otimes 1_B) = 
\rho_0 (U(t)^\dagger (a \otimes 1_B) U(t)) \, .
$$
If $\rho_0$ is a separable pure state, it will project onto a pure state onto the subsystem.  However, as time goes by, $\rho (t)$ will not be separable anymore and we get an evolution of a mixed state for the subsystem out of the evolution of a pure state for the total system.    By letting the separable state $\rho_0$ vary by changing the second factor in $\mathcal{A}_B$ while preserving the first factor in $\mathcal{A}_A$, we would get an evolution for the projection on the system $\mathcal{A}_A$ which originates from the same initial point but would evolve with different trajectories, each one depending on the second factor.    

When is it possible to describe the projected evolution by means of a vector field?  This means that the projected trajectories would be described by a semigroup because the evolution would change the rank.   The answer to this question was provided by A. Kossakowski and further formalised by Gorini, Kossakowski, Sudarshan and Linbland \cite{Go76}, \cite{Li76}.   The trajectories would be solutions of the Kossakowski-Lindblad master equation.


\section{Conclusions and discussion}

The geometric description of mechanical systems based on the K\"ahler geometry of the space of pure states of a closed quantum system is proposed as an alternative picture of Quantum Mechanics.
The composition of systems is also briefly discussed in this setting.

The tensorial description of Quantum Mechanics would allow for generic nonlinear transformations, hopefully more flexible to deal with nonlinearities, like entanglement, entropies and so on.  Thus, the geometrical-tensorial description allows to recover as a covariance group of our description the full diffeomorphism group (similarly to General Relativity).

To illustrate the various aspects of the theory we study finite-dimensional systems, with a particular focus on the qubit example.  
It is shown that in the carrier space of the theory there are Hamiltonian and gradient vector fields $X_a$ and $Y_b$ generating the action of the Lie group $SL(\mathcal{H})$.
This action may be extended to the closed convex body $\stsp$ of all quantum states.
From the point of view of the affine ambient space $\mathfrak{T}_{1}$ of Hermitean operators with trace equal to $1$ in which $\stsp$ naturally sits, we find that this action has, again, an infinitesimal description in terms of Hamiltonian and gradient-like vector fields closing on a realization of the Lie algebra $\mathfrak{sl}(\mathcal{H})$.
Moreover, from the perspective of the evolution, to describe semigroups we have to introduce Kraus vector fields on $\mathfrak{T}_{1}$.
Having described the dynamics in terms of vector fields will provide a framework to describe non-Markovian dynamics.  
States in the ``bulk'' may have as ``initial conditions'' pure, extremal states.  The evolution would be described by a family of semigroups associated with
higher order vector fields.


%
\section*{Acknowledgements}
The authors acknowledge financial support from the Spanish Ministry of Economy and Competitiveness, through the Severo Ochoa Programme for Centres of Excellence in RD (SEV-2015/0554).
AI would like to thank partial support provided by the MINECO research project MTM2014-54692-P and QUITEMAD+, S2013/ICE-2801.   GM would like to thank the support provided by the Santander/UC3M Excellence Chair Programme.

\end{document}